\newcommand{\Autoref}[1]{%
  \begingroup%
  \def\chapterautorefname{Chapter}%
  \def\sectionautorefname{Section}%
  \def\subsectionautorefname{Subsection}%
  \autoref{#1}%
  \endgroup%
}
  \providecommand\BibTeX{{%
    \normalfont B\kern-0.5em{\scshape i\kern-0.25em b}\kern-0.8em\TeX}}}
\begin{document}

\title{Leverage the Average: Averaged Sampling in Pre-Silicon Side-Channel Leakage Assessment}

\author{Pantea Kiaei}
\affiliation{%
  \institution{Worcester Polytechnic Institute}
  \city{Worcester, MA}
  \country{USA}}
\email{pkiaei@wpi.edu}

\author{Zhenyuan Liu}
\affiliation{%
  \institution{Worcester Polytechnic Institute}
  \city{Worcester, MA}
  \country{USA}}
\email{zliu12@wpi.edu}

\author{Patrick Schaumont}
\affiliation{%
  \institution{Worcester Polytechnic Institute}
  \city{Worcester, MA}
  \country{USA}}
\email{pschaumont@wpi.edu}

\begin{abstract}
Pre-silicon side-channel leakage assessment is a useful tool to identify hardware vulnerabilities at design time, but it requires many high-resolution power traces and increases the power simulation cost of the design. By downsampling and averaging these high-resolution traces, we show that the power simulation cost can be considerably reduced  without significant loss of side-channel leakage assessment quality. We introduce a theoretical basis for our claims. Our results demonstrate up to 6.5-fold power-simulation speed improvement on a gate-level side-channel leakage assessment of a RISC-V SoC. Furthermore, we clarify the conditions under which the averaged sampling technique can be successfully used.
\end{abstract}

\vspace{-.5cm}
\begin{CCSXML}
<ccs2012>
   <concept>
       <concept_id>10002978.10003001.10003599</concept_id>
       <concept_desc>Security and privacy~Hardware security implementation</concept_desc>
       <concept_significance>500</concept_significance>
       </concept>
   <concept>
       <concept_id>10010583.10010682.10010712.10010713</concept_id>
       <concept_desc>Hardware~Best practices for EDA</concept_desc>
       <concept_significance>500</concept_significance>
       </concept>
 </ccs2012>
\end{CCSXML}
\vspace{-.5cm}

\ccsdesc[500]{Security and privacy~Hardware security implementation}
\ccsdesc[500]{Hardware~Best practices for EDA}
\vspace{-.5cm}

\keywords{Pre-silicon Leakage Assessment, Power Side-Channel, Power Simulation, ASIC}

\maketitle

\vspace{-.3cm}
\section{Motivation}

\newif\ifalternateintro

\alternateintrotrue

\ifalternateintro

Power side-channel attacks (PSCA) are a threat to computing systems and it is imperative that vulnerabilities of hardware products to PSCA are detected as early as possible during a chip design. To study such vulnerabilities, a designer must obtain high-resolution power traces of the design for a large number of test-vectors, and apply statistical analysis on those power traces. Two common solutions to early side-channel leakage assessment are hardware prototyping and simulation. Each solution presents unique challenges as they make a different trade-off between the speed of collecting high-resolution power traces and the fidelity of their predicted power traces to those of the actual chip. 
Hardware prototyping using FPGAs is a popular technique that can be applied as soon as the RTL code is available. The power traces of the actual chip under development can be predicted by measuring the power traces of the FPGA. However, ensuring fidelity to the actual chip power traces is a challenge because of the fundamental difference between the building blocks of FPGAs (configurable logic) and those of ASIC designs (standard cell libraries). 
Power simulation of the actual gate-level netlist using CAD tools can ensure a higher fidelity to the actual chip under development. Furthermore, the simulated power traces are noiseless and thus 
represent the observation of the best-equipped attacker. A major disadvantage of simulation is that the power simulation time increases drastically at the lower abstraction levels of design.
Furthermore, simulation-based power side-channel leakage assessment can show the presence of leakage in a design, however, it cannot guarantee the absence of leakage.

In this paper, we describe a technique to decrease the power simulation time without raising the abstraction level of the simulation. Our method is to reduce the sample rate of the simulated power traces of a design, from one sample per clock cycle down to one sample over multiple clock cycles. Each of the resulting power samples thus represents the average power consumption over multiple clock cycles. We observe that reducing the sample rate in this manner significantly decreases the power simulation time. In addition, we show that the resulting traces at a reduced sample rate still return meaningful side-channel leakage assessment results. We derive the precise conditions under which this averaged sampling technique can be applied to side-channel leakage assessment. We also apply our results to several realistic case studies including a hardware crypto-coprocessor and a pipelined RISC-V processor.

Power simulation is commonly performed at three different abstraction levels: RTL, gate-level (post-synthesis/ post-layout), and transistor-level.
RTL and gate-level power simulations use event-driven simulation traces in combination with power estimators, while transistor-level power simulations are based on analog simulation. We apply our averaged sampling methodology at the gate-level, as it provides a middle ground with manageable simulation time and sufficient accuracy \cite{9184495}. For example, using Composite Current Source (CCS) delay model in synthesis has shown to generate power and timing estimates comparable to that of the analog transistor-level \cite{el2011ccs}.
In addition, the fidelity of gate-level power simulation can be increased by replacing the post-synthesis Standard Delay Format (SDF) file for a post-layout SDF file, generated after place and route of the chip under development, without any changes to the power simulation.

We study the averaged sampling technique as follows. In the next section, we review related work in pre-silicon side-channel leakage assessment. In \autoref{sec:theory}, we describe the basic concepts of averaged sampling and explain its effect on 
Correlation Power Analysis (CPA), a technique for side-channel leakage assessment. \Autoref{sec:experiments} introduces our case studies, including a pipelined RISC-V, and a hardware coprocessor design. We then conclude the paper.

\else

Power side-channel attack (PSCA) is a threat to embedded systems. While various types of PSCA have been proposed, such as differential power analysis (DPA) \cite{kocher1999differential} and correlation power analysis (CPA) \cite{brier2004correlation}, they all exploit the dependency of power consumption of a device on a secret data.
To avoid vulnerabilities of hardware products to PSCA, it is crucial to find a design's vulnerability at its earlier stages to reduce the turnaround time and the cost of production. 

Early assessment of vulnerability of a design against PSCA can be done either by prototyping using FPGAs or by power simulation using 
CAD tools. Even though prototyping using FPGAs sounds like an easily accessible approach, in practice it has several drawbacks. First, since the power measurement setup can affect the power SCA immensely, assessments on FPGA prototypes are bound to the measurement setup precision. Therefore, any less-than-perfect setup can fail to find certain vulnerable points of the design that a more potent attacker could find. 
However, assessing vulnerability to PSCA using CAD tools can estimate the power consumption of the device without being affected by the measurement precision. In other words, CAD tools can estimate the power SCA vulnerability of a device like the best-equipped attacker and maintain the highest possible signal-to-noise ratio (SNR).

Second, the process and technology of the library cells, \textit{i.e.}, building blocks, of silicon chips greatly affect the power consumption of the chips. Therefore, the fundamental distinction between the building blocks in FPGAs, \textit{i.e.}, look-up tables, and ASIC designs, \textit{i.e.}, standard cell libraries, makes the power leakage of FPGA prototypes likely to diverge from that of the ASIC implementation for the same design.
On the other hand, power simulation using CAD tools can be performed as part of the ASIC implementation using precisely the same library with which the fabrication is done.


However, performing such vulnerability assessments requires acquiring several power traces. As the time to capture a power trace is higher in simulation than in practice, an important drawback of using CAD tools for this purpose is the time it takes to generate sufficient number of traces.
Since the simulated power traces are not downgraded by measurement noise, compared to when measured, fewer traces are required to find a vulnerability. 
Power simulation can be done in three main abstraction levels: RTL-level, gate-level (post-synthesis and post-layout), and transistor-level.
Each abstraction-level is useful at different steps of the ASIC design flow. RTL and gate-level power simulations work with the same event-driven power estimators while transistor-level power traces require analog models.
Meanwhile, using Composite Current Source (CCS) delay model in synthesis has shown to generate power and timing estimates comparable to that of transistor-level \cite{el2011ccs}.
Therefore, gate-level simulation provides a middle ground solution with manageable simulation time and sufficient accuracy \cite{9184495}. 
For more accurate gate-level simulation, a post-synthesis Standard Delay Format (SDF) file can be swapped for a post-layout SDF file, generated after place and route in ASIC design flow, without any changes in the power simulation.

Furthermore, in our experiment using power simulation tools, the more number of samples per trace is requested from the tool, the longer it takes for the traces to be generated. 
Fewer samples per trace might raise questions about the accuracy of traces, however, because of the way the power samples are estimated using power simulator, we study how we can prevent losing too much accuracy and even, depending on the design, use fewer samples per trace to our advantage.
Therefore, in this work, we aim to study the effect of reducing the number of samples per simulated power trace on the ability to find power side-channel leakage (PSCL). Since power assessment at design-time is an iterative process, starting with fewer samples per trace and gradually performing more localized simulations is beneficial in reducing the design time. 

We provide a model for an ideal power simulator, show the implication of reduced number of samples in power simulation on PSCL assessment, and demonstrate our findings using commercial CAD tools. 
\todo{description of sections}

\fi  



\vspace{-.6cm}
\section{Related Work}\label{sec:related}
At the early stages of a hardware design, after behavioral simulation/verification, power side-channel leakage (PSCL) can be investigated at the RTL-level to find and harden vulnerable parts of the design. For this purpose, both RTL-PSC \cite{he2019rtl} and Param \cite{kf2020param} use the switching activity from RTL simulation to model the power consumption of a design.

{\v{S}}ija{\v{c}}i{\'c} {\em et al.} \cite{vsijavcic2020towards} introduce a simplified model of power consumption called marching stick model (MSM) for faster less accurate modeling of power side-channel. As they point out, this model is useful for RTL-level simulations and can reduce the trace generation time significantly.
However, it cannot capture the effects modeled by CAD tools in post-synthesis and later stages of the design. 
Furthermore, they demonstrate the most time-consuming part of the PSCL at design time is the power simulation using CAD tools. 

The use of CAD tools for gate-level simulation is adopted in ACA \cite{yao2020architecture} and Karna \cite{slpsk2019karna}.
ACA finds the leaky gates in the gate-level netlist and replaces them locally with hardened structures. Karna finds the leaky parts of the implementation and replaces each gate with their lower-power versions if available in the standard cell library.
Recently, the authors of Saidoyoki\cite{kiaeisaidoyoki} showed simulated power traces can provide sufficient information for successful attacks with very few downsampled traces which greatly reduces the power simulation time as the bottleneck of using power simulation CAD tools in PSCL assessment at design-time.

Most pre-silicon PSCL assessment tools have explored the different abstraction layers (RTL, gate-level, transistor-level) \cite{buhan2021sok}.
To the best of our knowledge, our work is the first paper to study power simulation cost from the dimension of time resolution and its implication on pre-silicon PSCL assessment.
As increasingly more researchers are emphasizing the integration of PSCL evaluation into ASIC design flow, it is crucial to study the implication of simulated power traces using CAD tools on leakage analysis.

\vspace{-.5cm}
\section{Theoretical Background}
\label{sec:theory}



\subsection{Power Side-Channel Analysis}

\textbf{Problem Statement.}
PSCAs find sensitive information processed in a device by a divide and conquer approach. Even cryptographic modules with long secret keys are susceptible to such attacks. In AES-128 (128-bit secret key) for instance, a PSCA can focus on finding one byte of the key at a time and therefore reduce the search space from $2^{128}$ to $16\times 2^{8}$. 
In our evaluations, we focus on CPA as the most prominent form of PSCA. However, the same conclusions can be made for other statistical moment-based measures of leakage such as DPA \cite{kocher1999differential} and TVLA \cite{Becker2013TestVL}. 
We specifically avoid TVLA in our evaluations to prevent the known false-positive issues related to this test \cite{standaert2018not} to affect our conclusions.



\textbf{CPA.}
Correlation power analysis (CPA) \cite{brier2004correlation} finds all the subkeys processed on a device by assuming a leakage model for the device. Through calculating the Pearson correlation coefficient ($\rho$) between the collected power dissipation from the device for random known inputs and the constructed leakage for each subkey guess, an attacker finds the correct subkey as the one resulting in the highest absolute $\rho$ value. 

\vspace{-.3cm}
\subsection{Simulating Power Traces}

Side-channel leakage assessment
requires accurate modeling of the statistical properties of data-dependent power effects, and requires power consumption analysis over many different test-vectors. The data-dependent power effects that may count as side-channel leakage include data-dependent logic transitions, glitches, static leakage, propagation delays, and parasitic coupling. 
We use gate-level power modeling, as it is able to capture most of the data-dependent logic effects, while at the same time being much more efficient than transistor-level (SPICE-level) simulation. Time-based gate-level power modeling is supported in commercial tools such as Cadence Joules and Synopsys Primepower.

A brief review of time-based gate-level power estimation follows. A simulated power side-channel analysis campaign on a circuit collects $N$ power traces of $K$ frames each, for a total of $N.K$ power estimations. Each power estimation determines the average power consumption of the circuit within that frame. At gate level, the circuit power consumption includes three components: switching power, internal power, and leakage power (\autoref{fig:powersources}). These factors are determined over each gate in the design, and the per-gate contributions are added up to yield circuit power. The switching power of a gate depends on the per-frame toggle rate of the output pin(s) and the capacitive load of the output pin(s). The internal gate power depends on the per-frame, per-pin input toggle rate. The gate leakage power depends on the per-frame state-dependent leakage power. The power simulator uses a technology library to reflect proper scaling factors for each gate configuration and the circuit operating conditions (temperature, voltage).

\begin{figure}
    \centering
    \vspace{-.2cm}
    \includegraphics[width=.75\linewidth]{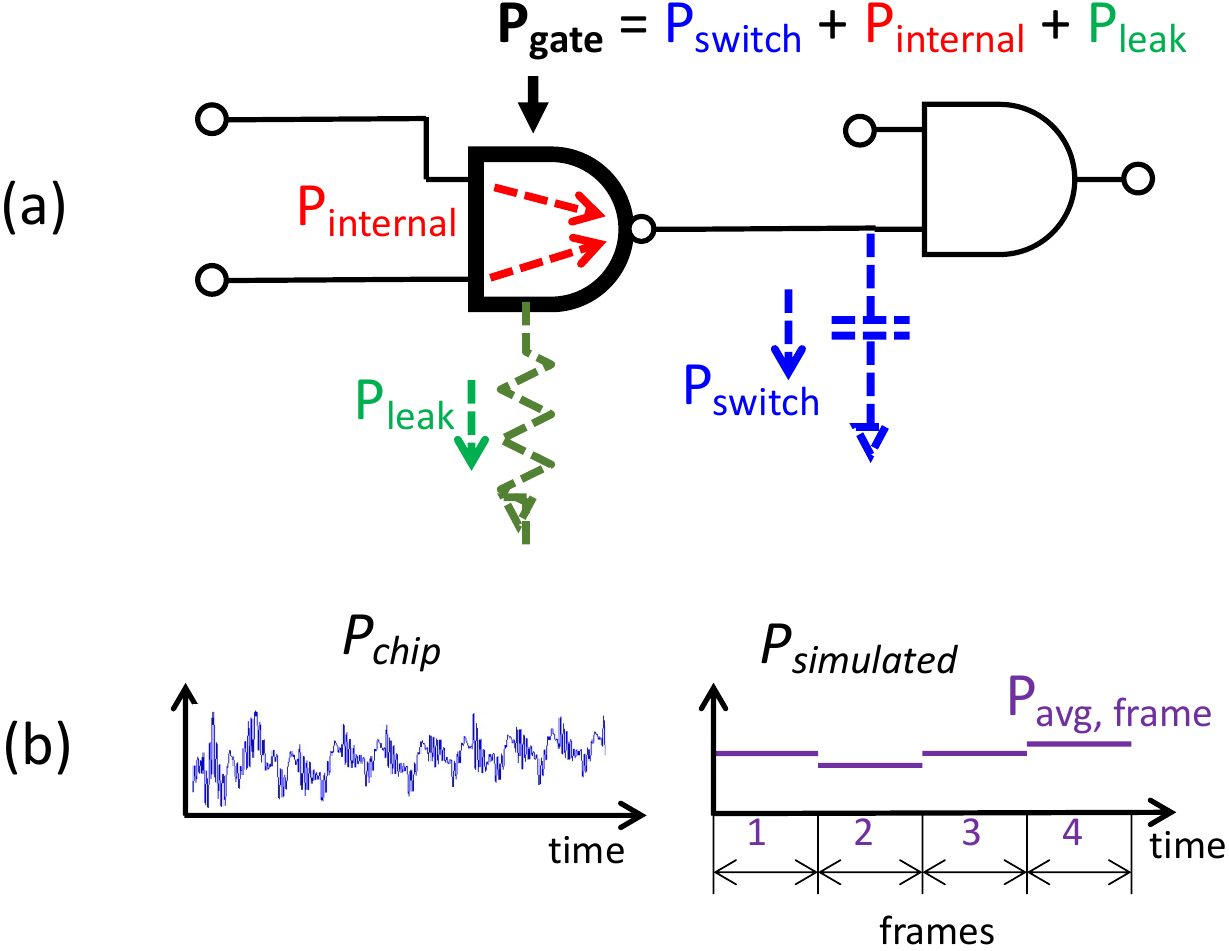}
    \vspace{-.4cm}
    \caption{(a) Gate-level power estimation for side-channel leakage captures per-gate and per-event switching power, leakage-power and internal power. (b) Gate-level power estimation partitions time in frames and determines average circuit power per frame to construct a power trace.}
    \label{fig:powersources}
    \vspace{-.2cm}
\end{figure}

\begin{table}[!t]\centering
\vspace{-.3cm}
\caption{Normalized complexity of power estimation time for three different design sizes and four different frame widths.}\label{tab:framesize}
\vspace{-.4cm}
\scriptsize
\begin{tabular}{lrrr}\toprule
\textbf{Frames} & \textbf{1 Counter} & \textbf{8 Counters} & \textbf{32 Counters} \\\midrule
\textbf{1}    & 1.0    &  7.5     &   22.7  \\
\textbf{10}   & 1.3    &  8.5     &   32.0  \\
\textbf{100}  & 3.6    &  29.7    &   139.3 \\
\textbf{1000} & 20.8   &  202.6   &   969.1 \\
\bottomrule
\end{tabular}
\vspace{-.6cm}
\end{table}

The power traces in a simulated power side-channel analysis campaign are obtained in two steps. First, a logic-level simulation records the gate and net activities for each of the $N$ test vectors in a Value Change Dump (VCD) format. Next, the VCD stimuli are analyzed for the gate-level netlist to obtain per-frame power traces. 

We observe that a larger frame window size may shorten the power estimation time, and illustrate this effect in \autoref{tab:framesize}. Three different designs containing one, eight or thirty-two 32-bit counters are implemented in SkyWater 130nm standard cells. Next, their power consumption over a 10-million clock cycle testbench is estimated by Cadence Joules (v20.01). We computed traces containing 1, 10, 100, and 1000 frames. \autoref{tab:framesize} shows a strong dependence of power estimation time over design size, which is expected: larger designs contain more events, and therefore take more time for power estimation. However, \autoref{tab:framesize} also shows a strong dependence over the number of frames. Indeed, all events corresponding to a single gate or net within the same frame are accumulated into the toggle rate for that frame. Reducing the number of frames over a set of events results in reducing the number of times a power model will be computed. This observation encourages us to investigate the impact of downsampling on the quality of side-channel leakage assessment. Clearly, power traces with fewer samples provide a simulation time advantage; the question is if they are still useful to identify side-channel leakage. 



\vspace{-.2cm}
\subsection{Sampling Power Traces}
In this section, we discuss non-averaged and averaged sampling. Non-averaged sampling is prevalent in power measurement using oscilloscopes where each power sample is a snapshot of the power consumption in a moment of time. Averaged sampling is how power simulation tools produce traces where each sample is the averaged power consumption of the design during the time interval of a frame. In the following, we give a mathematical model for each sampling type and study its implication on CPA as a measure of leakage assessment.
A summary of the analyses is given in \autoref{tab:summary_models}.
Note that we do not perform averaging as an extra step in our proposed method. Rather, we start from the averaged traces as the output of commercial power simulators.

\begin{table}[!htp]\centering
\vspace{-.2cm}
\caption{Summary of sample types}\label{tab:summary_models}
\vspace{-.4cm}
\scriptsize
\begin{tabular}{lrrr}\toprule
Sampling Type & Power Model & Implications for CPA \\\midrule
1) Non-averaged & \autoref{eq:smpl} & \autoref{th:cpa_nonave} \\
2) Averaged - Constructive & \autoref{eq:avg_same_key_rand} & \autoref{th:cpa_constr} \\
3) Averaged - Destructive & \autoref{eq:avg_multikey} & \autoref{th:cpa_destr} \\
\bottomrule
\end{tabular}
\vspace{-.2cm}
\end{table}

\vspace{-0.1cm}
\subsubsection{Non-Averaged Sampling}
In non-averaged sampling, we use the following model for each power sample:
\begin{equation}
\small\label{eq:smpl}
    P_{n,{t_s}} = \alpha . \zeta(x_n, k_{c_{t_s}}) + r_{n,{t_s}} + \mu_r
\end{equation} \normalsize
where $P_{n,{t_s}}$ denotes the power consumption at time sample $t_s$ of the $n^{th}$ power trace, $\alpha$ is an unknown constant, $\zeta(x_n, k_{c_{t_s}})$ is the selection function dependent on the controllable variable in trace n ($x_n$) and the correct key in time sample $t_s$. We model the algorithmic noise as a zero-mean random noise $r_{n,{t_s}}$ and a constant unknown bias $\mu_r$.

This model is similar to the model used by Fei {\em et al.} \cite{fei2014statistics}. 
However, each trace may contain leakage of multiple key bytes, so that one trace can correlate with different correct key values. An example is the SubBytes function in the AES-128 algorithm, which will process 16 key bytes in every round.

\textbf{Implications for CPA attack.} 
Next, we derive the performance of a CPA attack on non-averaged traces.
\begin{theorem}\label{th:cpa_nonave}
In non-averaged sampling, a sufficiently large number of traces is required for CPA attack to be successful.
\end{theorem}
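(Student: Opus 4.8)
The plan is to write out explicitly the Pearson coefficient $\rho$ that a CPA adversary evaluates, over the $N$ traces, between the non-averaged samples $P_{n,t_s}$ of \autoref{eq:smpl} and the hypothesised leakage $\zeta(x_n,k_g)$ for each subkey guess $k_g$, and to show that the correct guess is distinguished only once $N$ exceeds a threshold that is fixed in $N$ but grows as the algorithmic-noise level increases. Write $\zeta_n^{c}=\zeta(x_n,k_{c_{t_s}})$ for the leakage under the correct key and $\zeta_n^{g}=\zeta(x_n,k_g)$ for a guess. First I would note that the constant bias $\mu_r$ is a pure shift and so cancels in $\rho$; only the zero-mean term $r_{n,t_s}$, which I take to be independent of the inputs $x_n$, survives. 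In the large-sample limit this gives
\begin{equation*}
\rho\big(P_{n,t_s},\,\zeta_n^{g}\big)\;\longrightarrow\;\frac{\alpha\,\mathrm{Cov}(\zeta^{c},\zeta^{g})}{\sqrt{\alpha^{2}\,\mathrm{Var}(\zeta^{c})+\mathrm{Var}(r)}\;\sqrt{\mathrm{Var}(\zeta^{g})}},
\end{equation*}
because the empirical cross-covariance $\tfrac{1}{N}\sum_n r_{n,t_s}\zeta_n^{g}$ tends to $0$.

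Next I would compare this limit across guesses. For $k_g=k_{c_{t_s}}$ it equals the constant $\rho_\infty=\alpha\sigma_{\zeta}/\sqrt{\alpha^{2}\sigma_{\zeta}^{2}+\sigma_{r}^{2}}>0$, independent of $N$; for a wrong guess it equals $\alpha\,\mathrm{Cov}(\zeta^{c},\zeta^{g})/(\sigma_{P}\sigma_{\zeta^{g}})$, which for a sound selection function is strictly smaller in magnitude, the residual being controlled by the confusion (ghost-peak) coefficient of the S-box. Thus asymptotically the correct key wins by a margin $\delta=\rho_\infty-\max_{g\neq c}|\rho^{g}_\infty|>0$ that does not depend on $N$.

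Then I would quantify the finite-$N$ error. The empirical correlation deviates from its limit with standard deviation of order $(1-\rho^{2})/\sqrt{N}$ --- the classical fact behind the usual ``rule of thumb'' and the Fisher-transform confidence interval --- while, separately, the cross term $\tfrac{1}{N}\sum_n r_{n,t_s}\zeta_n^{g}$ is of order $\sigma_r/\sqrt{N}$ by the central limit theorem. For the empirical peak at $k_{c_{t_s}}$ to exceed every ghost peak with high probability, this $1/\sqrt{N}$ fluctuation must be dominated by the constant margin $\delta$, which forces $N=\Omega(1/\delta^{2})$. Since $\delta$ shrinks (and $\rho_\infty\to 0$) as $\alpha^{2}/\sigma_r^{2}$ decreases, the required number of traces grows without bound as the trace becomes noisier --- precisely the claim that ``a sufficiently large number of traces is required.''

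The step I expect to be the main obstacle is the statement about \emph{incorrect} key guesses: one must argue that no wrong guess attains a limiting correlation as large as $\rho_\infty$, i.e.\ that the selection function is non-degenerate (its confusion coefficient is bounded away from the value that would make a ghost peak full height). I would either impose this as an explicit hypothesis on $\zeta$ or invoke the standard CPA distinguishing-margin analysis for the AES S-box, and for the concentration part I would cite the known sample-size formula for the sample Pearson correlation rather than re-derive a tail bound from scratch.
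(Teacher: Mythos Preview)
Your argument is correct and considerably more detailed than the paper's own proof. The paper simply writes out the empirical Pearson coefficient, expands the centered power sample as
\[
P_{j,t_s}-\mathbb{E}_n[P_{n,t_s}]=\alpha\,\Delta\zeta_{j,k_{c_{t_s}}}+r_{j,t_s}-\tfrac{1}{N}\sum_{n}r_{n,t_s},
\]
observes that the per-trace noise $r_{j,t_s}$ therefore contaminates the covariance, and stops there with the qualitative conclusion that ``sufficiently many traces'' are needed for the estimate to converge. It neither computes the asymptotic correlation, nor compares correct and incorrect key guesses, nor gives a sample-complexity bound.

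Your route is genuinely different in scope: you pass to the large-$N$ limit, separate the correct-key value $\rho_\infty$ from the ghost-peak values, introduce the distinguishing margin $\delta$, and then use the $O(1/\sqrt{N})$ fluctuation of the sample correlation to obtain $N=\Omega(1/\delta^{2})$. This buys you an actual quantitative statement and makes explicit the hidden assumption (non-degenerate selection function / bounded confusion coefficient) that the paper leaves implicit. The paper's approach buys only brevity: it is really just the observation that a noise term survives in the finite-sample estimator, which is all the authors need to set up the later contrast with the averaged-sampling case. Your identification of the ``main obstacle'' (ruling out a wrong key attaining $\rho_\infty$) is a concern the paper does not address at all; for their informal level of rigor this is fine, but you are right that a complete proof would need that hypothesis.
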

\begin{proof}
The correlation coefficient between the selection function with key guess and the power trace is the covariance between the two sets divided by their standard deviations:
\small\begin{equation} \label{eq:rho}
    \begin{split}
        \rho = \frac{\sum_{j=1}^{N} \left( \left( \Delta\zeta_{j,k_g} \right) \left( P_{j,t_s} - \mathbb{E}_n[{P_{n,t_s}}] \right) \right) }{\sqrt{\sum_{j=1}^{N}\left( \Delta\zeta_{j,k_g} \right)^2 \sum_{j=1}^{N}\left( P_{j,t_s} - \mathbb{E}_n[{P_{n,t_s}}] \right)^2}}
    \end{split}
\end{equation}\normalsize
where $k_g$ is the key guess, and  
$\Delta\zeta_{j,k_g} = \zeta(x_j,k_g) - \mathbb{E}_n[{\zeta(x_n,k_g)}]$.
The only term affected by the random noise in $\rho$ is
\(P_{j,t_s}-\mathbb{E}_n[{P_{n,t_s}}] = \alpha.\Delta\zeta_{j,k_{c_{t_s}}} + r_{j,t_s} - \frac{1}{N}\sum_{n=1}^{N}r_{n,t_s}\)
where
$\Delta\zeta_{j,k_{c_{t_s}}} = \zeta(x_j, k_{c_{t_s}}) - \frac{1}{N} \sum_{n=1}^{N}\zeta(x_n,k_{c_{t_s}}).$
Because of the dependence of $P_{j,t_s}$ on random noise ($r_{j,t_s}$), the covariance term is imprecise and requires sufficiently large number of traces to converge. 
\end{proof}

\subsubsection{Averaged Sampling}
In averaged sampling, the power sample for time interval ($t_{s_1},t_{s_2}$) in the $n^{th}$ power trace is:
\small\begin{equation*} \label{eq:avg_smpl}
    \begin{split}
        & P_{n,{t_{s_1},t_{s_2}}} = \mathbb{E}_{t_s \in (t_{s_1},t_{s_2})}[P_{n,{t_s}}] \\ 
        & = \alpha. \mathbb{E}_{t_s \in (t_{s_1},t_{s_2})}[{\zeta(x_n, k_{c_{t_s}})}]
        + \mathbb{E}_{t_s \in (t_{s_1},t_{s_2})}[{r_{n,{t_s}}}] + \mu_r .
    \end{split}
\end{equation*}\normalsize

We separately analyze two scenarios in the following: 
    1)~only one key value is processed during the time interval,
    2)~multiple key values are processed during the time interval.

\subsubsection*{Case 1) One key value during the time interval}
If a single key value ($k_{B_1}$) is processed during the interval ($t_{s_1}$, $t_{s_2}$), the power sample in this interval will be:
\small\begin{equation} \label{eq:avg_same_key_rand}
    P_{n,{t_{s_1},t_{s_2}}} = \alpha. \zeta(x_n, k_{B_1}) + \mathbb{E}_{t_s \in (t_{s_1},t_{s_2})}[{r_{n,{t_s}}}] + \mu_r .
\end{equation}\normalsize

\begin{theorem}
With enough time samples processing the same key value, the averaged power samples are shifted by the constant bias in the algorithmic noise.
\end{theorem}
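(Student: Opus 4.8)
The plan is to start from \autoref{eq:avg_same_key_rand} and note that, because a single key value $k_{B_1}$ is processed throughout the interval $(t_{s_1},t_{s_2})$ and the controllable input $x_n$ is fixed within a trace, the selection-function term $\alpha\,\zeta(x_n,k_{B_1})$ is constant over the frame, while $\mu_r$ is constant by definition. Hence the only quantity whose value depends on the frame is the empirical average of the algorithmic noise, $\mathbb{E}_{t_s \in (t_{s_1},t_{s_2})}[{r_{n,{t_s}}}]$, and the whole statement reduces to controlling this single term.

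Next I would show that this empirical average tends to zero as the frame contains more cycle-level samples. Writing it as $\frac{1}{M}\sum_{t_s \in (t_{s_1},t_{s_2})} r_{n,t_s}$, where $M$ is the number of samples in the frame, and recalling from \autoref{eq:smpl} that each $r_{n,t_s}$ is zero-mean with bounded variance, the law of large numbers gives $\frac{1}{M}\sum r_{n,t_s}\to 0$. Substituting back into \autoref{eq:avg_same_key_rand} yields $P_{n,{t_{s_1},t_{s_2}}}\to \alpha\,\zeta(x_n,k_{B_1})+\mu_r$, i.e.\ the averaged power sample equals the noiseless leakage value shifted by exactly $\mu_r$, the constant bias of the algorithmic noise, which is the assertion.

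The step I expect to require the most care is the justification of the law of large numbers here: cycle-resolution power values are generally temporally correlated, so the plain i.i.d.\ version does not literally apply. I would instead invoke a weak-dependence argument: if the autocovariances of $\{r_{n,t_s}\}$ are summable, then $\mathrm{Var}\!\left(\frac{1}{M}\sum r_{n,t_s}\right) = O(1/M) \to 0$, so the frame average still converges in mean square, hence in probability. This also makes precise what ``enough time samples'' means: the frame must be wide compared with the correlation length of the noise for the bias-only approximation to be tight. I would close by remarking that the surviving term $\mu_r$ does not average away, but that it is harmless for a CPA attack, since the correlation coefficient in \autoref{eq:rho} subtracts the per-trace sample mean of the power, which is the link to the subsequent constructive-averaging result for CPA.
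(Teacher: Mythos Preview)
Your proposal is correct and follows essentially the same approach as the paper: both arguments observe that the only frame-dependent term in \autoref{eq:avg_same_key_rand} is the empirical noise average $\mathbb{E}_{t_s\in(t_{s_1},t_{s_2})}[r_{n,t_s}]$, let it vanish as the interval grows, and conclude $P_{n,{t_{s_1},t_{s_2}}}\approx \alpha\,\zeta(x_n,k_{B_1})+\mu_r$. The paper's version is terser---it simply asserts the limit and introduces a practical threshold $\Delta_{thr}$---while you additionally supply the law-of-large-numbers rationale and the weak-dependence caveat, but the core reasoning is identical.
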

\begin{proof}
For a sufficiently long interval, we have:
\small\begin{equation*}
    \lim_{\Delta_{t_s} \to \infty} \mathbb{E}_{t_s \in (t_{s_1},t_{s_1}+\Delta_{t_s})}[{r_{n,{t_s}}}] = 0.
\end{equation*}\normalsize

In practice, given a long enough time interval correlating to one key value, \textit{i.e.}, $\Delta_{B_1} = t_{s_2} - t_{s_1} > \Delta_{thr}$, we have:
\small\begin{equation*} \label{eq:avg_same_key}
    P_{n,{t_{B_1},t_{B_1}+\Delta_{B_1}}} \approx \alpha. \zeta(x_n, k_{B_1}) + \mu_r .
\end{equation*}\normalsize
\end{proof}

\begin{definition}[constructive samples]
Since power samples corresponding to the same key value make the correlation between the power trace and key value stronger by removing the random noise, we name them \textbf{\textit{constructive samples}}.

\end{definition}

\textbf{Implications for CPA attack.}
Assuming averaged sampling of constructive samples, the following theorem holds.
\begin{theorem} \label{th:cpa_constr}
Given enough constructive samples in a power trace by design, 
averaged sampling will require fewer traces for CPA to succeed than non-averaged sampling.
\end{theorem}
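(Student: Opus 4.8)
The plan is to re-run the computation from the proof of \autoref{th:cpa_nonave}, but with the averaged power model \autoref{eq:avg_same_key_rand} in place of \autoref{eq:smpl}, and then compare how much noise survives in the correlation estimator in the two cases. First I would write the Pearson coefficient exactly as in \autoref{eq:rho}, with $P_{j,t_s}$ replaced by the averaged sample $P_{j,t_{s_1},t_{s_2}}$. The selection-function factors $\Delta\zeta_{j,k_g}$ in the numerator and the $\sum_{j}(\Delta\zeta_{j,k_g})^2$ in the denominator are untouched by the choice of sampling; the only term carrying randomness is $P_{j,t_{s_1},t_{s_2}} - \mathbb{E}_n[P_{n,t_{s_1},t_{s_2}}]$. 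Expanding this with \autoref{eq:avg_same_key_rand} gives $\alpha\,\Delta\zeta_{j,k_{B_1}} + \bar r_{j} - \frac{1}{N}\sum_{n}\bar r_{n}$, where $\bar r_{j} = \mathbb{E}_{t_s\in(t_{s_1},t_{s_2})}[r_{j,t_s}]$ is the within-frame average of the algorithmic noise and the constant bias $\mu_r$ has cancelled against its own expectation.

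Second, I would invoke the preceding (constructive-sample) theorem: because the frame contains $\Delta_{B_1} > \Delta_{thr}$ time samples all processing the same key $k_{B_1}$, and the $r_{n,t_s}$ are zero-mean, the within-frame average $\bar r_{j}$ has variance roughly $\sigma_r^2/\Delta_{B_1}$ — that is, it is smaller than the per-sample noise $r_{j,t_s}$ of the non-averaged case by a factor of about $\Delta_{B_1}$. Consequently the residual noise that perturbs the covariance in the numerator of $\rho$ is suppressed by the same factor relative to the situation in \autoref{th:cpa_nonave}, while the useful term $\alpha\,\Delta\zeta_{j,k_{B_1}}$ is unchanged.

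Third, I would convert this variance reduction into a statement about the number of traces. Using the standard CPA convergence heuristic — the number of traces $N$ needed to distinguish the correct key is governed by the variance of the empirical correlation, which decays like $1/N$ with a proportionality constant that grows with the noise-to-signal ratio — a reduction of the effective noise variance by a factor $\Delta_{B_1}$ translates into needing on the order of $\Delta_{B_1}$ times fewer traces for the estimated $\rho$ of the correct key to separate from the wrong-key hypotheses above a fixed confidence level. In the limiting regime $\bar r_{j}\to 0$ the averaged samples follow $P_{j,t_{s_1},t_{s_2}}\approx \alpha\,\zeta(x_j,k_{B_1})+\mu_r$, the numerator of $\rho$ becomes noiseless, and $\rho$ reaches its data-limited value almost immediately, which is the strongest form of the claim.

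The main obstacle I anticipate is making the ``fewer traces'' comparison quantitatively honest rather than purely heuristic. This needs (i) an independence or mixing assumption on the noise samples $\{r_{n,t_s}\}_{t_s}$ inside a frame, so that intra-trace averaging genuinely shrinks the variance (glitch-induced noise within one clock region need not be independent), and (ii) a precise convergence criterion for CPA — for instance a confidence-interval or Fisher-transform bound on the empirical correlation — so that ``number of traces to succeed'' is well defined. I would state these as explicit modeling assumptions, consistent with the zero-mean treatment of $r_{n,t_s}$ already adopted for \autoref{eq:smpl}, after which the argument goes through; a fully rigorous version would additionally track the concentration of the denominator term $\sum_{j}(P_{j,t_{s_1},t_{s_2}}-\mathbb{E}_n[P_{n,t_{s_1},t_{s_2}}])^2$, but since that term only rescales $\rho$ and is itself better-behaved once the noise is averaged down, it does not change the direction of the comparison.
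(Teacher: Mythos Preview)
Your proposal is correct and follows essentially the same route as the paper: substitute the averaged power model \eqref{eq:avg_same_key_rand} into the Pearson formula \eqref{eq:rho}, isolate the centered term $P_{j,t_{s_1},t_{s_2}}-\mathbb{E}_n[P_{n,t_{s_1},t_{s_2}}]$, and observe that the within-frame averaging of $r_{j,t_s}$ suppresses (and in the limit eliminates) the noise contribution while leaving $\alpha\,\Delta\zeta_{j,k_{B_1}}$ intact. The paper's own proof is considerably terser---it jumps directly to the limiting case where the averaged noise has vanished rather than tracking the intermediate $\sigma_r^2/\Delta_{B_1}$ variance scaling and the Fisher-type trace-count heuristic you outline---so your version is in fact a more careful elaboration of the same argument, and your flagged assumptions (independence of the intra-frame noise samples, an explicit CPA success criterion) are exactly the points the paper leaves implicit.
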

\begin{proof}
The correlation coefficient between the selection function with key guess and the power trace is the same as \autoref{eq:rho} replacing the power samples with \autoref{eq:avg_same_key_rand}.
Given enough constructive samples to be averaged in the interval ($t_{s_1},t_{s_2}$), we have:
\small\[ P_{j,t_{s_1},t_{s_2}} - \mathbb{E}_n[{P_{n,t_{s_1},t_{s_2}}}] = \alpha.\left(\zeta(x_j,k_{B_1}) - \mathbb{E}_n[{\zeta(x_n,k_{B_1})}] \right).\]\normalsize
Therefore, the correlation is not affected by the algorithmic noise and CPA can converge with fewer traces. 

However, without enough constructive samples, we have:
\small\begin{equation*}
    \begin{split}
        P_{j,t_{s_1},t_{s_2}} - \mathbb{E}_n[{P_{n,t_{s_1},t_{s_2}}}] & = \alpha.\left(\zeta(x_j,k_{B_1}) - \mathbb{E}_n[{\zeta(x_n,k_{B_1})}] \right) \\ 
        & + \mathbb{E}_{t_s \in (t_{s_1},t_{s_2})}{ [r_{j,t_s}-\mathbb{E}_n[{r_{n,t_s}}]]}.
    \end{split}
\end{equation*}\normalsize
Therefore a sufficiently large number of traces is required for the random noise to reach its zero mean and render CPA successful.
\end{proof}

\subsubsection*{Case 2) Multiple key values during the time interval}
We assume key value $k_{B_1}$ is processed in the first part of the interval ($t_{s_1}$, $t_{s_m}$) and other key values from the set ${\{k_{B_1}\}}'$  (${k_{B_1} \not\in {\{k_{B_1}\}}'}$) are processed in the rest of the interval ($t_{s_m}$, $t_{s_2}$). 
To further simplify the analysis, we assume the time samples in interval ($t_{s_m}$, $t_{s_2}$) correspond to processing the same key value $k_{{B_1}}' \neq k_{B_1}$. The power sample in interval ($t_{s_1}$, $t_{s_2}$) can be written as:
\small\begin{equation}\label{eq:avg_multikey}
    \begin{split}
        P_{n,{t_{s_1},t_{s_2}}} & = \alpha. \zeta(x_n, k_{B_1}) + \alpha. \zeta(x_n, k_{{B_1}}') \\
        & + \mathbb{E}_{t_s \in (t_{s_1},t_{s_2})}[{ r_{n,t_s} }] + \mu_r .
    \end{split}
\end{equation}\normalsize

\begin{definition}[destructive samples]
Since the power samples corresponding to different key values being processed makes the correlation between the power trace and the secret key value weaker, we name them \textbf{\textit{destructive samples}}.
\end{definition}

\vspace{-.2cm}
\textbf{Implications for CPA attack.}
Similarly, \autoref{th:cpa_destr} is derived for CPA attack.
\begin{theorem} \label{th:cpa_destr}
Averaging destructive samples diminishes the success probability of CPA attack even with a large number of power traces.
\end{theorem}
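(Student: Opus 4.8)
The plan is to substitute the destructive averaged-sample model of \autoref{eq:avg_multikey} into the correlation-coefficient expression \autoref{eq:rho} and examine the limit $N\to\infty$, exactly as in the proofs of \autoref{th:cpa_nonave} and \autoref{th:cpa_constr}, but now carrying along the extra selection-function term. Centering the averaged power sample gives
\[
P_{j,t_{s_1},t_{s_2}}-\mathbb{E}_n[P_{n,t_{s_1},t_{s_2}}] = \alpha\,\Delta\zeta_{j,k_{B_1}} + \alpha\,\Delta\zeta_{j,k_{{B_1}}'} + \mathbb{E}_{t_s\in(t_{s_1},t_{s_2})}\!\big[r_{j,t_s}-\mathbb{E}_n[r_{n,t_s}]\big],
\]
where $\Delta\zeta_{j,k}=\zeta(x_j,k)-\mathbb{E}_n[\zeta(x_n,k)]$. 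As in the earlier proofs the zero-mean noise term vanishes as $N\to\infty$, so the obstruction here is not slow convergence of the covariance; what persists is the sum of \emph{two} selection-function contributions.

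First I would evaluate $\rho$ for the correct key guess $k_g=k_{B_1}$. Normalizing by $N$, the numerator converges to $\alpha\big(\mathrm{Var}(\zeta_{\cdot,k_{B_1}})+\mathrm{Cov}(\zeta_{\cdot,k_{B_1}},\zeta_{\cdot,k_{{B_1}}'})\big)$ and the denominator to $\alpha\sqrt{\mathrm{Var}(\zeta_{\cdot,k_{B_1}})}\cdot\sqrt{\mathrm{Var}(\zeta_{\cdot,k_{B_1}})+\mathrm{Var}(\zeta_{\cdot,k_{{B_1}}'})+2\,\mathrm{Cov}(\zeta_{\cdot,k_{B_1}},\zeta_{\cdot,k_{{B_1}}'})}$. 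Even in the most favorable case, where the two selection functions are uncorrelated (the generic situation for, e.g., two distinct AES S-box key hypotheses on the same plaintext byte), this reduces to
\[
\rho_{k_{B_1}} \;\longrightarrow\; \frac{\sqrt{\mathrm{Var}(\zeta_{\cdot,k_{B_1}})}}{\sqrt{\mathrm{Var}(\zeta_{\cdot,k_{B_1}})+\mathrm{Var}(\zeta_{\cdot,k_{{B_1}}'})}} \;<\;1,
\]
so the correct-key correlation is strictly attenuated relative to the constructive regime of \autoref{th:cpa_constr}, and the attenuation factor does not shrink as $N$ grows.

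Next I would exhibit a competing ``ghost'' peak: running the same computation with $k_g=k_{{B_1}}'$ yields a limiting correlation $\sqrt{\mathrm{Var}(\zeta_{\cdot,k_{{B_1}}'})}/\sqrt{\mathrm{Var}(\zeta_{\cdot,k_{B_1}})+\mathrm{Var}(\zeta_{\cdot,k_{{B_1}}'})}$, which is of the same order as $\rho_{k_{B_1}}$ (indeed $\rho_{k_{B_1}}^2+\rho_{k_{{B_1}}'}^2\to 1$ in the uncorrelated case) and can exceed it whenever the second key value drives more switching activity within the frame; for every other $k_g$ the limiting correlation is $0$. Thus two key hypotheses survive the averaging with comparable correlation, and since additional traces only sharpen the estimates of these limiting values without widening the gap between $k_{B_1}$ and its ghost, CPA cannot reliably select $k_{B_1}$ — which is exactly the ``diminished success probability even with a large number of power traces'' claimed by the theorem.

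I expect the main obstacle to be making the ``comparable magnitude, no gap'' statement rigorous, since it rests on a modeling assumption about the joint law of $\zeta(x_n,k_{B_1})$ and $\zeta(x_n,k_{{B_1}}')$ (near-orthogonality together with the relative variances of their centered versions). My plan is to state that assumption explicitly rather than attempt to prove it, and to dispatch the edge case $\mathrm{Cov}\neq 0$ separately by noting that the correct-key correlation is still bounded strictly below the ideal value of the constructive case and the ghost peak still persists, so the qualitative conclusion is unchanged.
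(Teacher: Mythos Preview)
Your proposal is correct and its opening step---centering the averaged sample from \autoref{eq:avg_multikey} to obtain $\alpha\,\Delta\zeta_{j,k_{B_1}}+\alpha\,\Delta\zeta_{j,k_{{B_1}}'}+\text{(noise)}$ and observing that the $\alpha\,\Delta\zeta_{j,k_{{B_1}}'}$ term survives as $N\to\infty$---is exactly the paper's entire proof; the paper stops at that decomposition and simply asserts that the residual term ``greatly impairs the correlation.'' Everything you do beyond that point (the explicit limiting value $\sqrt{\mathrm{Var}_1}/\sqrt{\mathrm{Var}_1+\mathrm{Var}_2}$, the ghost peak at $k_g=k_{{B_1}}'$, and the discussion of the near-orthogonality assumption) is a genuine strengthening: it turns the paper's qualitative claim into a quantitative bound and exhibits a concrete mechanism---a competing hypothesis whose correlation does not fade with more traces---that explains why the attack's success probability cannot be rescued by increasing $N$. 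One caveat worth flagging in your write-up: the ghost-peak argument tacitly assumes both key values feed the \emph{same} controllable input $x_n$ in the selection function, which matches the paper's single-$x_n$ model but may not hold when the two destructive bytes act on distinct plaintext bytes; in that case the attenuation bound still stands, but the ghost peak need not appear at a value in the attacker's search space.
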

\begin{proof}
The correlation coefficient between the selection function and the power trace is the same as the previous case.
Similarly, following the simplified averaged sample in \autoref{eq:avg_multikey}, we have:
\small\begin{equation*}
    \begin{split}
        P_{j,t_{s_1},t_{s_2}} - \mathbb{E}_n[{P_{n,t_{s_1},t_{s_2}}}] & = \alpha .\Delta\zeta_{j, k_{B_1}} + \alpha .\Delta\zeta_{j, k_{{B_1}}'}  \\
        & + \mathbb{E}_{t_s \in (t_{s_1},t_{s_2})}[{ r_{j,t_s}-\mathbb{E}_n[{r_{n,t_s}}] }]
    \end{split}
\end{equation*}\normalsize
where $\Delta\zeta_{j,k_{B}} = \zeta(x_j,k_{B}) - \mathbb{E}_n[{\zeta(x_n,k_{B})}]$.
Therefore, even with enough traces to cancel out the random noise, the term $\alpha .\Delta\zeta_{j, k_{{B_1}}'}$ greatly impairs the correlation.
\end{proof}
\vspace{-.5cm}
\subsection{Empirical verification of theorems}
\textbf{Setup.}
Following the assumption that at each clock trigger there is a rush of current in the circuit, we generate power traces in the shape of sawtooth function where the amplitude is a function of the current level and the period is similar to the system clock period. This model follows the so-called Delta-I noise \cite{gonzalez1999low} and is similar to the model used by Tiran et al. \cite{tiran2014model}.
SCA attacks, like DPA and CPA, exploit the dependency of the power consumption of a circuit on the secret data as well as a controlled data. To make our generated power traces applicable to such attacks, we further make the amplitude of the samples in each trace dependent on a function of a secret and a controlled byte. For our traces, we choose $f_{n,t_s} = xor(x_n, k_{t_s})$ where $x_n$ is the controlled value for the $n^{th}$ trace and $k_{t_s}$ is the secret value for the time sample $t_s$.
Each trace corresponds to one controlled value and contains 64 clock cycles. This is similar to measured traces in practice where an attacker feeds a device with known inputs and acquires one trace for each given input. Every 16 consecutive clock cycles correspond to one key byte value (constructive). 
These traces are interpolated in Matlab up to a continuous-time power trace such that we can create both a non-averaged sample trace (oscilloscope) as well as an averaged sample trace (CAD power simulation) from the same source version.  
We generate 256 such traces, one for each controlled byte value. 


\textbf{Experiment 1 - averaged vs. non-averaged sampling.}
For each of the 256 traces, we generate noisy continuous-time traces with Signal-to-Algorithmic Noise Ratio (SANR)\footnote{We differentiate between the SNR pertaining to the measurement noise and the SNR pertaining to the algorithmic noise, calling the latter Signal-to-Algorithmic Noise Ratio (SANR). Algorithmic noise is any signal not correlated with the target of the attack.} values of \{10, 5, 1, 0.5, 0.25, 0.1\}.
For each SANR value we create two power traces from the same source data: a non-averaged sampled version with two samples per clock cycle, and an averaged sampled version with two frames per clock cycle. 
We run CPA on the two sets of noisy traces for each key byte separately and calculate the rank of the correct key value. We repeat the same process of noisy trace generation, sampling, and CPA 100 times and report the average correct key rank and average success rate (SR) calculated as $\text{SR} = \frac{\text{number of successful attacks}}{\text{total number of attacks}}$ 
in \autoref{fig:ave_vs_nonave}. 

\begin{figure}[t]
\vspace{-.5cm}
    \centering
    \includegraphics[width=\linewidth]{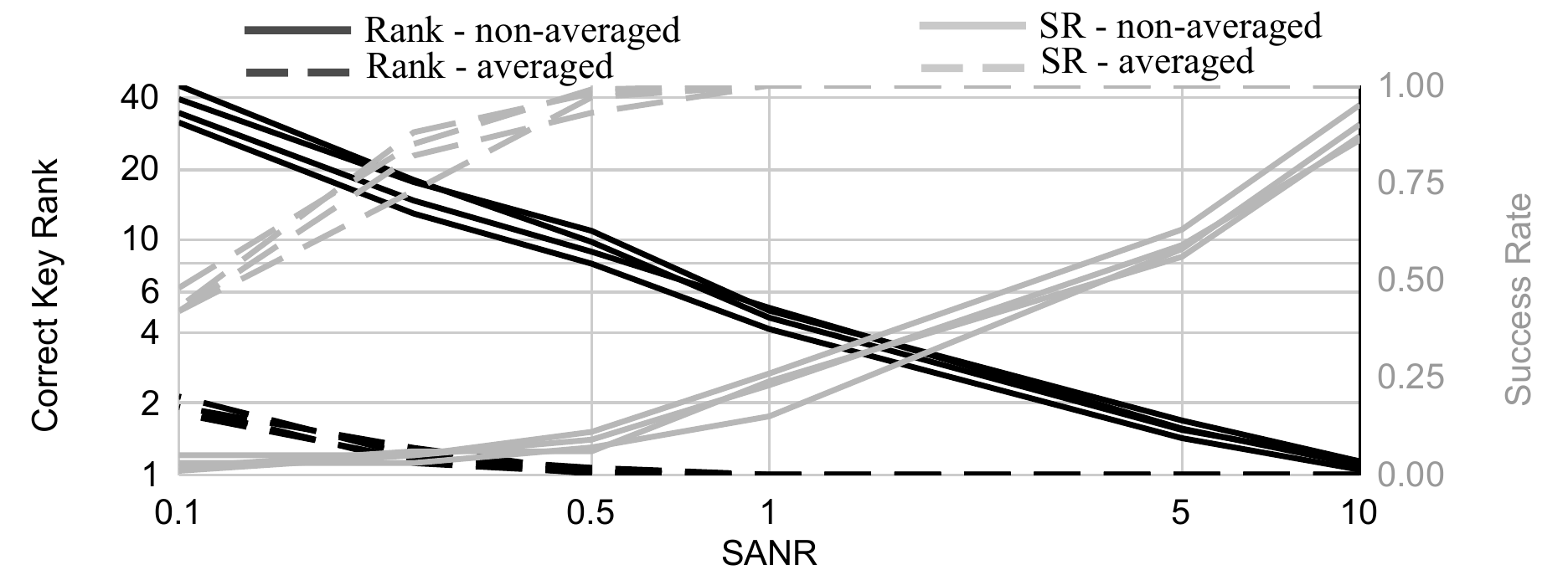}
    \vspace{-.7cm}
    \caption{Correct key rank and SR of CPA attack on 256 noisy traces with different SANR values for the four different key bytes. Solid (resp. dashed) lines show results for non-averaged (resp. averaged) sampled traces. Averaged over 100 runs.}
    \label{fig:ave_vs_nonave}
    \vspace{-.5cm}
\end{figure}

This experiment shows a substantial improvement of averaged traces over non-averaged traces. For example, attacks on averaged traces approach a success rate of 1 as soon as SANR$>0.5$, while non-averaged traces achieve the same success rate only at SANR$>10$. 


\textbf{Experiment 2 - constructive samples.}
For the same set of power traces with SANR=0.1 (weakest SANR), we average up to 16 clock cycles corresponding to the same key byte (constructive samples). 
We report the SR of CPA attack to compare the different levels of constructive averaging in \autoref{tab:SR_constructive}. As this table shows, as we average more constructive samples, it becomes more likely to find the correct key. 

\begin{table}[h]\centering
\vspace{-.3cm}
\caption{SR of CPA on key bytes in simulated traces for different number of averaged constructive samples with SANR$=0.1$ (averaged over 100 runs).}\label{tab:SR_constructive}
\vspace{-.4cm}
\scriptsize
\begin{tabular}{lrrrrr}\toprule
\textbf{\# Constructive} &\textbf{SR byte 1} &\textbf{SR byte 2} &\textbf{SR byte 3} &\textbf{SR byte 4} \\\midrule
2 &0.68 &0.61 &0.68 &0.66 \\
4 &0.78 &0.78 &0.79 &0.71 \\
8 &0.83 &0.92 &0.83 &0.81 \\
16 &0.92 &0.96 &0.9 &0.95 \\
\bottomrule
\end{tabular}
\vspace{-.3cm}
\end{table}


\textbf{Experiment 3 - destructive samples.}
We average 16 constructive cycles for each key byte and add up to 16 destructive cycles to the averaged sample to study the effect of destructive samples on CPA results. 
\autoref{tab:SR_destructive} shows the SR decreases as more destructive samples are added, attesting to \autoref{th:cpa_destr}. 

\begin{table}[h]\centering
\vspace{-.3cm}
\caption{SR of CPA on key bytes in simulated traces for different ratio of destructive to constructive samples with SANR=0.1 (averaged over 100 runs).}\label{tab:SR_destructive}
\vspace{-.4cm}
\scriptsize
\begin{tabular}{lrrrrr}\toprule
\textbf{$\frac{\text{\# Destructive}}{\text{\# Constructive}}$} &\textbf{SR byte 1} &\textbf{SR byte 2} &\textbf{SR byte 3} &\textbf{SR byte 4} \\\midrule
0 &0.93 &0.93 &0.91 &0.9 \\
0.25 &0.87 &0.79 &0.86 &0.86 \\
0.5 &0.58 &0.6 &0.68 &0.68 \\
0.75 &0.39 &0.36 &0.62 &0.57 \\
1 &0.04 &0.03 &0.06 &0.07 \\
\bottomrule
\end{tabular}
\vspace{-.5cm}
\end{table}


\vspace{-.3cm}
\section{Case Studies} \label{sec:experiments}


In the following case studies, we show how different hardware designs affect the level of averaging that can be done without loss of precision.
Our case studies contain a pipelined microprocessor running software AES, and an AES hardware accelerator. We chose AES-128 as its vulnerabilities are well-understood in the literature, and therefore this case makes a good driver for our experiments.
\vspace{-.3cm}
\subsection{Case Study 1: Software AES on a Pipelined Processor}
For our pipelined processor, we use
the five-stage \mbox{BRISC-V}\footnote{\url{https://ascslab.org/research/briscv/index.html}} implementation of RISC-V RV32I ISA (integrated into a system-on-chip \cite{cryptoeprint:2021:1236}).
The five stages in this core are instruction fetch, instruction decode and operand access, execution, memory access, and write back. 
We synthesize the design for the open standard cell library SkyWater 130nm and 50MHz frequency using Cadence Genus. 
In each gate-level simulation, we load the binary file of the compiled AES software code into the program memory of the processor, feed a new plaintext into the simulation, and collect the vcd file.
We then simulate the power traces of the SubBytes step in the first round using Cadence Joules. 
In the leakage assessment step, we run CPA on the simulated traces with a leakage model of the Hamming weight of the first round SBox output, \textit{i.e.}, \texttt{HW(SBox(k$\oplus$p))} where \texttt{k} is the key byte and \texttt{p} is the corresponding plaintext byte.

\autoref{lst:asm_skivav} shows the assembly code for SubBytes running on RISC-V. In this code, the SBox is implemented as a table look-up. Iterating 16 times by two nested loops, line 10 loads the SBox result for each state byte and line 11 stores the result in the memory location of the state byte. 
According to the tested leakage model (\texttt{HW(SBox(k$\oplus$p))}), both lines 10 and 11 will leak. The leakage will stem from any sequential or combinatorial part of the processor handling the SBox output. 

\vspace{-.4cm}
\begin{minipage}[t]{\linewidth}
\lstinputlisting[basicstyle=\scriptsize,numbers=left,xleftmargin=1cm,caption=Assembly code of AES SubBytes for RISC-V, label={lst:asm_skivav}]{sections/skiva.riscv}
\end{minipage}

\autoref{fig:instr_flow_skivav} demonstrates the data dependency of instr. 11 on instr. 10 causing a stall in cycle 3 (data dependency between decode and execute stage in cycle 2) and another stall in cycle 4 (data dependency between decode and memory stage in cycle 3). Furthermore, there is a one-cycle delay for memory load instructions, which causes instr. 10 to stall in the memory stage cycle 4 and subsequently flush in the write-back stage. Once instr. 10 reaches the write-back stage, its result is forwarded to instr. 11 in the decode stage.  Consequently, this pipeline diagram demonstrates that cycles 3 through 7 cause leakage of the \texttt{HW(SBox(k$\oplus$p))} during the highlighted cells. Therefore, power samples taken from these cycles are constructive samples for our leakage model. Since we have a 5-cycle window of constructive samples and we don't control the starting point of averaged samples, we average 3 clock cycles to ensure only constructive samples are averaged. 

\begin{figure}[t]
    \centering
    \includegraphics[width=.8\linewidth]{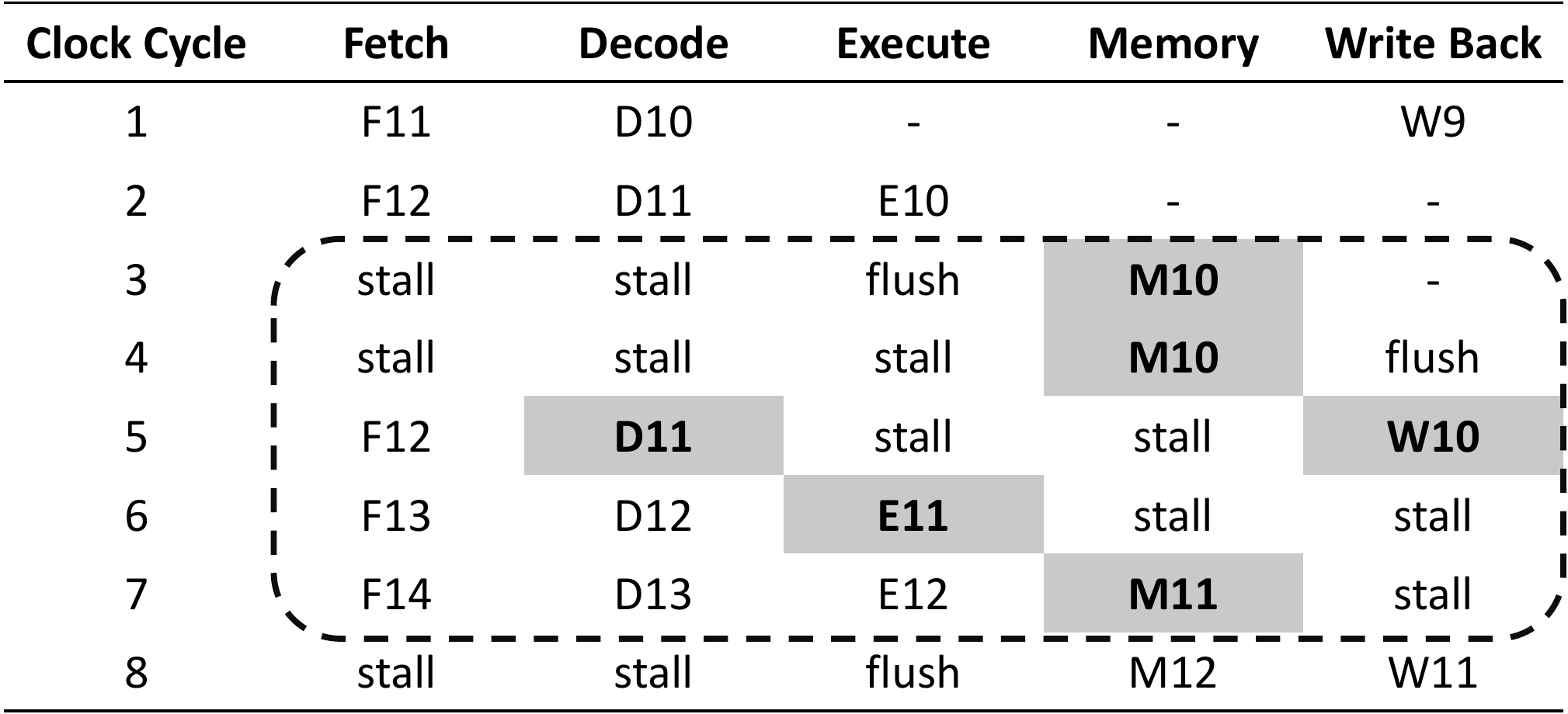}
    \vspace{-.3cm}
    \caption{Flow of instructions in \autoref{lst:asm_skivav} through five stages of RISC-V pipeline. Highlighted cells show the leaking parts.}
    \label{fig:instr_flow_skivav}
    \vspace{-.6cm}
\end{figure}

We simulate 1k power traces for RISC-V running the aforementioned assembly code and simulated traces with 1 sample per clock cycle (1s1cc) and 1 sample per 3 clock cycles (1s3cc). 
CPA on both trace sets resulted in $\rho > 0.999999$ for every key byte. The achieved high correlation despite the downsampled traces is thanks to the constructive samples.
Therefore, using the downsampled 1s1cc and 1s3cc traces, we can still find the leakage with highest possible correlation while reducing the power simulation time by 1.86$\times$.


Additionally, we simulate traces with even fewer samples (1 sample every 10, 20, 30, 40, and 50 cycles) to compare their correlation results in CPA. 
Although CPA is still able to reveal all key bytes successfully, 
the correlation coefficients for the heavily downsampled traces (1s10cc, 1s20cc, 1s30cc, 1s40cc, and 1s50cc) reduce. 
We can use heavily downsampled traces to quickly locate the leaky time periods and gradually increase the sample rate on the leaky regions to find more localized leaky points and therefore reduce the simulation time.
For instance, \autoref{fig:skivav-downsample-zoom} shows correlation coefficients from CPA on the last key byte of AES running on RISC-V.
\autoref{tab:skivav_speedup} shows the performance gain from increasing number of averaged samples for the same number of traces (normalized to 1s1cc).


\begin{figure}[h]
\vspace{-.3cm}
    \centering
    \includegraphics[width=.85\linewidth]{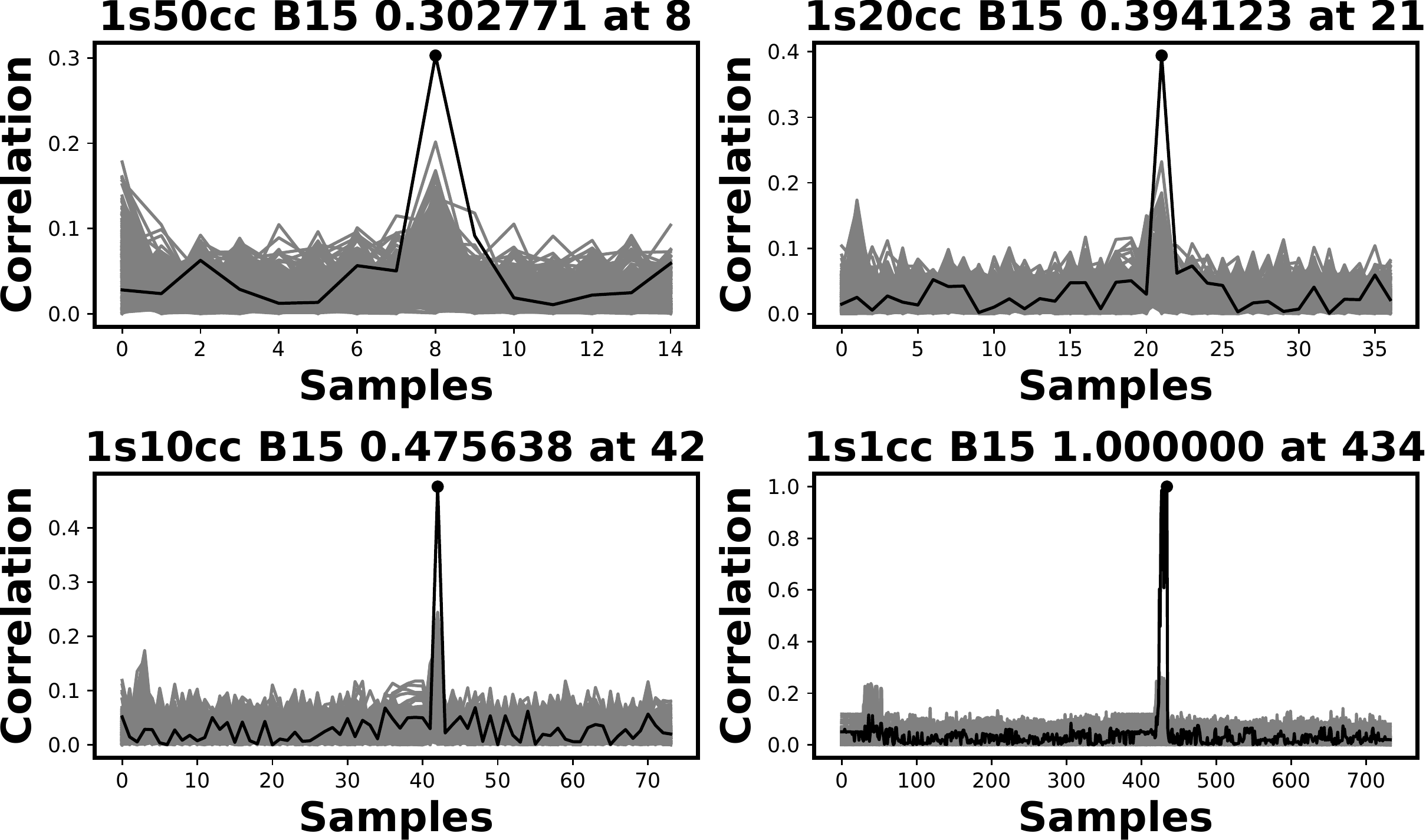}
    \vspace{-.4cm}
    \caption{CPA $\rho$ vs. sample number for locating the power leakage of last key byte (B15) from coarse (1s50cc) to finer (1s1cc) traces for software AES running on RISC-V. Grey lines are correlation values for incorrect key guesses and black line is the correlation value for correct key guess.}
    \label{fig:skivav-downsample-zoom}
    \vspace{-.5cm}
\end{figure}


\begin{table}[h]\centering
\vspace{-.4cm}
\caption{Speed-up of averaged RISC-V traces compared to 1s1cc}\label{tab:skivav_speedup}
\vspace{-.3cm}
\scriptsize
\begin{tabular}{lrrrrrr}\toprule
Num. of Averaged Clock Cycles &1s10cc &1s20cc &1s30cc &1s40cc &1s50cc \\\midrule
Simulation Speed-up &3.35$\times$ &4.86$\times$ &5.45$\times$ &5.78$\times$ &6.50$\times$ \\
\bottomrule
\end{tabular}
\vspace{-.5cm}
\end{table}


\vspace{-.2cm}
\subsection{Case Study 2: Hardware AES}
In the next experiment, we shift our focus from constructive samples to destructive samples. 
We consider a hardware accelerator for AES-128 that runs 4 SBoxes in parallel in each clock cycle, therefore, destructive samples are intrinsically present in each power sample.
\autoref{fig:picoaes} shows the structure of the AES implementation.
Each AES round is executed in 5 clock cycles. The first four clock cycles each execute 4 SBoxes (grey shade in \autoref{fig:picoaes}). The {\tt ma} (resp. {\tt mb}) multiplexers choose which 32bit part of the state (SW0 through SW3) should be updated (resp. calculated) through the SBox modules. 
The last clock cycle executes the rest of the blocks in succession ({\tt SR}, {\tt MC}, and AddRoundKey) following which {\tt ma} multiplexers choose the appropriate update values for each state word. 
{\tt mc} sets the correct input for the AddRoundKey step, \textit{i.e.}, plaintext ({\tt P}) at the start, {\tt MC} output for the first 9 rounds, and {\tt SR} output for the last round.
We synthesize the AES module for SkyWater 130nm standard cell library and 50MHz clock frequency using Cadence Genus and simulate 1k downsampled (1s1cc, 1s2cc, 1s3cc, 1s4cc) post-synthesis gate-level power traces using Cadence Joules. 

In the first round of AES, each {\tt SW} register will overwrite the first AddRoundKey output with its corresponding SBox output. Therefore, for CPA, we use \texttt{HD(k$\oplus$p,SBox(k$\oplus$p))} as the leakage model.
\autoref{fig:coaes_corr_zoom} shows the CPA correlation result for the last subkey as an example. The correlation values have drastically reduced and in some cases (e.g. 1s4cc for subkey 15) the CPA fails.
This example shows that a secure hardware designer should avoid combining destructive samples in the power analysis stage to prevent false negative conclusions in the assessments.

\begin{figure}[t]
    \centering
    \includegraphics[width=.75\linewidth]{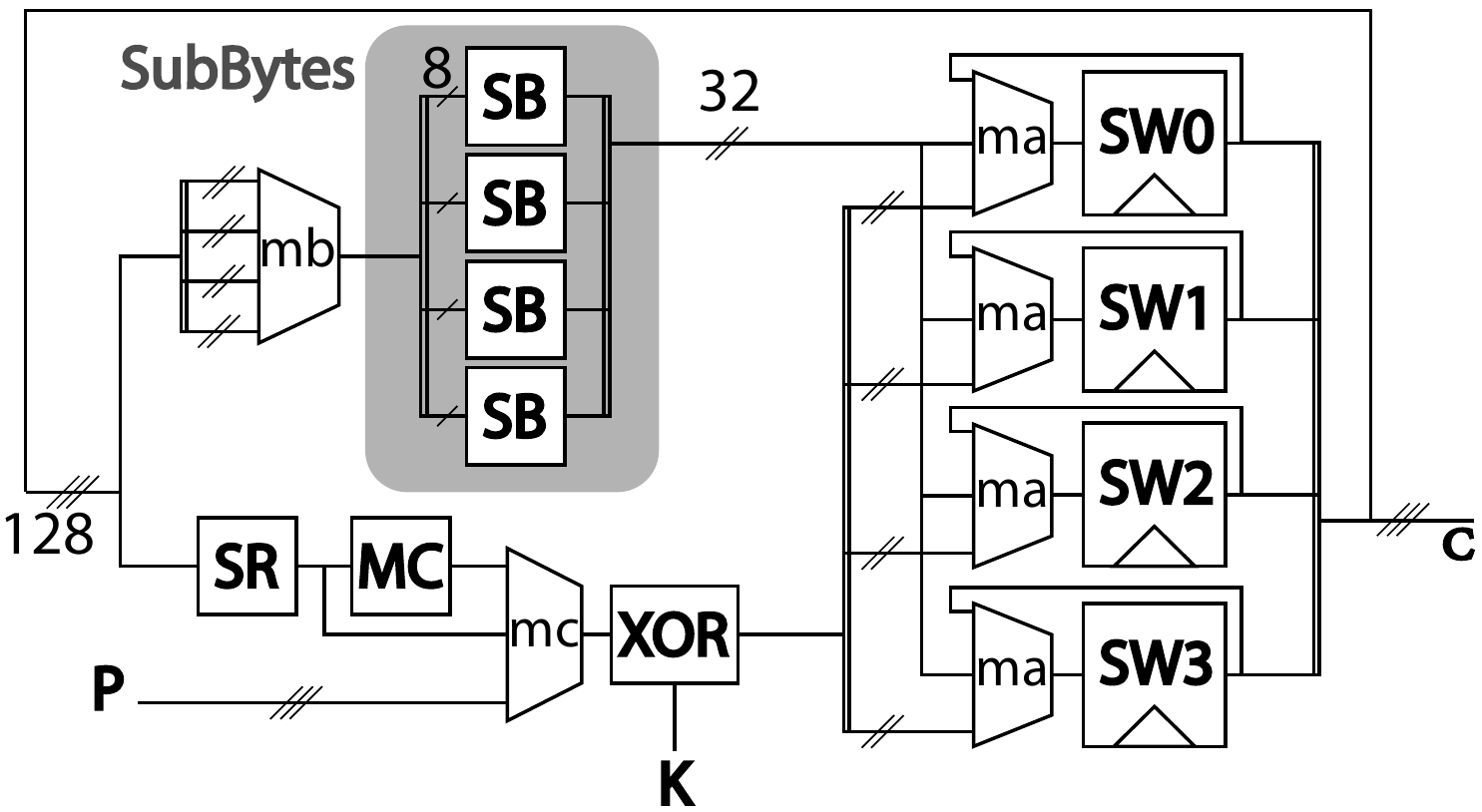}
    \vspace{-.4cm}
    \caption{Data path of the analyzed AES hardware accelerator. SB: SBox, SR: ShiftRows, MC: MixColumns.}
    \label{fig:picoaes}
    \vspace{-.6cm}
\end{figure}

\begin{figure}[h]
\vspace{-.3cm}
    \centering
    \includegraphics[width=.85\linewidth]{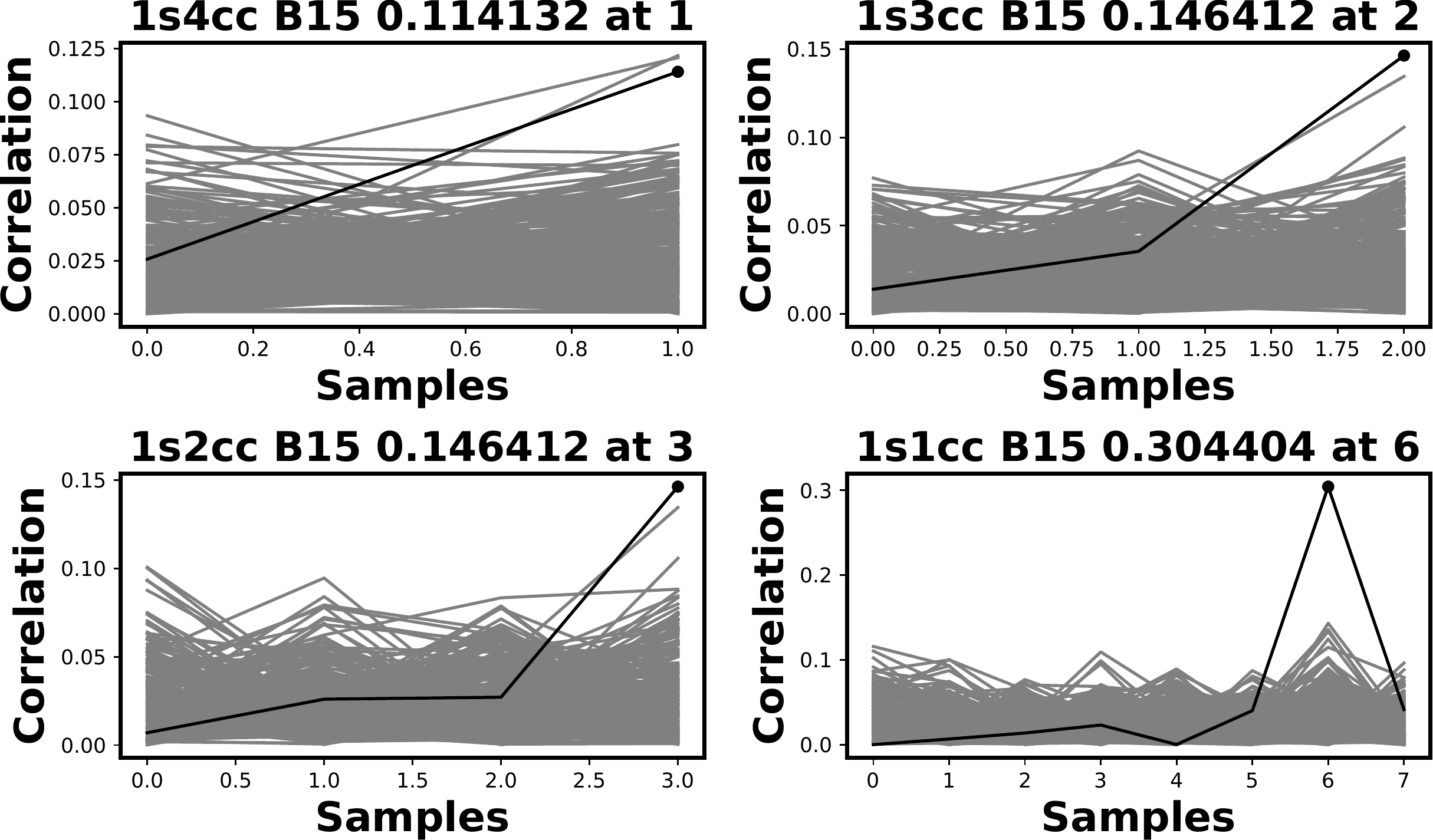}
    \vspace{-.4cm}
    \caption{CPA $\rho$ vs. sample number on last key byte (B15) for AES accelerator. Grey lines are correlation values for incorrect key guesses and black line is the correlation value for correct key guess.}
    \label{fig:coaes_corr_zoom}
\vspace{-.5cm}
\end{figure}


\textbf{Discussion.} As shown experimentally and theoretically, a secure hardware designer can take advantage of constructive samples in a design to downsample traces and reduce the power simulation time which is the bottleneck for pre-silicon PSCL assessment. At the same time, the designer should be aware that averaging destructive samples can lead to false negative conclusions in leakage assessment. 
Fortunately, classifying power samples as constructive or destructive is straightforward for well-known algorithms and implementations such as the ones analyzed in our case studies. However, additional preprocessing may be needed for unknown designs.
For instance, using specific test vectors that only exercise part of the key and applying statistical tests to tag a given power sample as destructive or constructive.
In our future work, we study techniques that can help a designer classify power samples as such for certain leakage criterion.


\vspace{-.3cm}
\section{Conclusion}\label{sec:conclusion}
We studied the implication of averaged sampling for pre-silicon side-channel leakage assessment.
We introduced the concept of constructive and destructive samples as categories of power samples which will facilitate or hinder the leakage assessment of a certain sensitive data. 
We showed how using downsampling in designs for which the constructive samples are well understood aid in reducing power simulation time without significant loss in precision.
In future work, we will study tests that can categorize samples as constructive or destructive to be applied to less-known designs. 
\vspace{-.3cm}
\begin{acks}
\vspace{-.1cm}
This research was supported in part by NSF award CNS-1931639.
\end{acks}

\vspace{-.2cm}

\bibliographystyle{ACM-Reference-Format}
\bibliography{Bibliography}
\vspace{-.3cm}


\end{document}
\endinput